\DeclareMathOperator{\poly}{poly}
\newtheorem{defn}{Definition}
\newtheorem{ex}{Example}
\newtheorem{obs}{Observation}
\newtheorem{thm}{Theorem}
\newtheorem{lem}[thm]{Lemma}
\newtheorem{prop}[thm]{Proposition}
\newtheorem{hyp}[thm]{Hypothesis}
\newtheorem{fact}{Fact}
\newcommand{\NP}{\textsc{NP}}
\newcommand{\AND}{\ensuremath{\wedge}}
\newcommand{\OR}{\ensuremath{\vee}}
\newcommand{\NOT}{\ensuremath{\neg}}
\newcommand{\CM}{\textsc{CM}}
\newcommand{\ACM}{\textsc{ACM}}
\newcommand{\HAZARD}{\textsc{Hazard}}
\newcommand{\DNFFALSE}{\textsc{Dnffalse}}
\newcommand{\CNFSAT}{\textsc{Cnfsat}}
\newcommand{\DNF}{\textsc{Dnf}}
\newcommand{\CNF}{\textsc{Cnf}}
\newcommand{\SETH}{\textsc{Seth}}
\title{On the complexity of detecting
hazards\footnote{https://doi.org/10.1016/j.ipl.2020.105980}}
\author{
  Balagopal Komarath\\
  Saarland University\\
  Germany\\
  \texttt{bkomarath@rbgo.in}
  \and
  Nitin Saurabh\footnote{This work was done when the author was affiliated with the Max Planck Institute for Informatics, Saarland Informatics Campus, Saarbr\"{u}cken, Germany.}\\
  Technion - IIT\\
  Israel\\
  \texttt{nitinsau@cs.technion.ac.il}}
\begin{document}

\maketitle

\begin{abstract}

  Detecting and eliminating logic hazards in Boolean circuits is
  a fundamental problem in logic circuit design. We show that
  there is no $O(3^{(1-\epsilon)n} \poly(s))$ time algorithm, for
  any $\epsilon > 0$, that detects logic hazards in Boolean
  circuits of size $s$ on $n$ variables under the assumption that
  the strong exponential time hypothesis is true.  This lower
  bound holds even when the input circuits are restricted to be 
  formulas of depth four. We also present a polynomial time
  algorithm for detecting $1$-hazards in \DNF\
  (or, $0$-hazards in \CNF) formulas. Since $0$-hazards in \DNF\
  (or, $1$-hazards in \CNF) formulas are easy to eliminate, this algorithm can
  be used to detect whether a given \DNF\ or \CNF\ formula has a
  hazard in practice.

\end{abstract}

\section{Introduction}
\label{sec:intro}

In logic design, one typically extends the Boolean domain $\{0,
1\}$ with a third value denoted by `$u$' to indicate the presence
of unstable voltage levels. In other words, it indicates that the value
of a Boolean variable is \emph{unknown}. Useful computation can be performed
even in the presence of unstable/unknown values.  For example, consider a
Boolean circuit over $\AND$, $\OR$, and $\NOT$ gates
that takes $2n$ bits as input and decides whether a
majority of the input bits are set to $1$.
Clearly if at least $n+1$ of its inputs are $1$
(or, $0$) and the rest of the inputs are $u$, the circuit should
ideally output $1$ (resp.,~$0$).\footnote{See Table~\ref{tab:truth} for the definition of Boolean gates in the presence of $u$.}
However, if the circuit outputs $u$ on
such inputs, then the circuit is said to have a \emph{hazard}.
A priori the circuit
may be hazard-free or not, but if it is monotone
(only $\AND$ and $\OR$ gates are allowed)
then it must be hazard-free~\cite{IKLLMS18}. 

It is well-known that popular physical realizations of the basic
logic gates are hazard-free. For example, if we feed a $0$ and a
$u$ to an AND ($\AND$) gate, then the $\AND$ gate will output a
$0$ (Table~\ref{tab:truth}). 
  However, it is not
necessary that a circuit constructed from hazard-free logic gates
is hazard-free. For example, the smallest circuit implementing a
one bit multiplexer has a hazard (see, e.g., \cite{IKLLMS18}).

For a logic circuit designer, it is desirable that every circuit
they construct is hazard-free. In a recent paper, Ikenmeyer
\textit{et al}.~\cite{IKLLMS18} showed that there are
$n$-variable Boolean functions with polynomial (in $n$) size
circuits such that any hazard-free circuit implementation for the
same function must use exponentially many gates. Therefore,
constructing hazard-free circuits is not always feasible. They
also showed that even the computational problem of detecting
whether a circuit has a hazard is $\NP$-complete.

Eichelberger's algorithm \cite{EIC65} for detecting hazards in a
circuit enumerates all minterms and maxterms of the Boolean
function computed by the circuit and evaluates the circuit on
each of them.  Since an $n$-variable Boolean function can have as
many as $\Omega(3^n/n)$ minterms \cite{CM78}, this algorithm is
not always efficient.  Since this problem is $\NP$-complete, one
cannot hope to obtain a polynomial time algorithm that works in
general. In such cases, moderately exponential time algorithms
are sought over algorithms that employ brute-force search.  For
example, it is known \cite{XN17} that the independent set problem
has a $O(\poly(n)1.19^n)$ time algorithm that performs much
better than the brute-force $O(\poly(n)2^n)$ time algorithm. Is
it possible to obtain such a moderately exponential time
algorithm for hazard detection in circuits?

In this work, we show that the $O(3^n\poly(s))$ time algorithm
for hazard detection on input circuits of size $s$ over $n$
variables  is almost optimal under a widely held conjecture known
as the \emph{strong exponential time hypothesis} (\SETH).  \SETH\
implies that there is no $O(2^{(1-\varepsilon)n}\poly(m))$ time
algorithm, for any $\varepsilon > 0$, for checking whether an
$n$-variable, $m$-clause \CNF\ is satisfiable. We show that there
is no $O(3^{(1-\varepsilon)n}\poly(s))$ time algorithm, for any
$\varepsilon >0$, for hazard detection on circuits of size $s$
over $n$ variables assuming \SETH. In fact, we show that this is
true even when the input circuits are restricted to be formulas
of depth four.

We also give a polynomial time algorithm to detect whether a
given \DNF\ formula has a 1-hazard. Since 0-hazards in \DNF\
formulas are easy to eliminate, this algorithm can be used to
check whether a given \DNF\ formula has a hazard in practice.  We
remark that, using duality of hazards, this also implies a hazard
detection algorithm for \CNF\ formulas. 

\section{Preliminaries}
\label{sec:prelim}

We study Boolean functions $f : \{0,1\}^n \to \{0,1\}$ on $n$
variables where $n$ is an arbitrary natural number. We are
interested in Boolean circuits over AND ($\AND$), OR ($\OR$), and
NOT ($\NOT$) gates computing such functions. We recall Boolean
circuits are directed acyclic graphs with a unique sink node
(output gate), where the source nodes (input gates) are labeled
by literals, i.e., $x_i$ or $\neg x_i$ for $i \in [n]$ and
non-source nodes are labeled by $\wedge$ or $\vee$ gates.  The
\emph{depth} of a gate in the circuit is defined as the maximum
number of $\AND$ or $\OR$ gates occurring on any path from an
input gate to this gate (inclusive). (Note that $\NOT$ gates do
not contribute to depth.) The \emph{depth} of a circuit is then
defined to be the depth of the output gate. In particular, $\CNF$
and $\DNF$ formulas have depth two.  We recall \emph{formulas}
are circuits such that the underlying undirected graph is a tree,
i.e., every gate other than the output gate has out-degree
exactly $1$.  

We refer to constant depth formulas where all gates of the same depth
are of the same type
by the sequence of $\AND$ and $\OR$ starting from the output gate. For
example, $\CNF$ formulas are $\AND\OR$ formulas. 

In our setting the input variables to circuits are allowed to take
an unstable value, denoted by $u$, in addition to the usual stable
values $0$ and $1$. The truth tables for gates in the basis
$\{\wedge, \vee, \neg\}$ in the presence of unstable values are given in
Table~\ref{tab:truth}. The truth table for larger fan-in $\AND$ and $\OR$
gates in the presence of $u$ can be similarly defined using associativity.
Thus, we can evaluate circuits on inputs from $\{u, 0, 1\}^n$
in the usual inductive fashion.
\begin{table}[ht]
  \centering
  \begin{tabular}{c|ccc}

    $\AND$ & $u$ & 1 & 0 \\ \hline
    $u$      & $u$ & $u$ & 0 \\
    1      & $u$ & 1 & 0  \\
    0      & 0 & 0 & 0 \\

  \end{tabular}
  \quad
  \begin{tabular}{c|ccc}

    $\OR$ & $u$ & 0 & 1 \\ \hline
    $u$     & $u$ & $u$ & 1 \\
    0     & $u$ & 0 & 1 \\
    1     & 1 & 1 & 1

  \end{tabular}
  \quad
  \begin{tabular}{c|c}

    $\NOT$ &   \\ \hline
    $u$      & $u$ \\
    0      & 1 \\
    1      & 0 

  \end{tabular}
  \caption{Truth table for AND, OR, and NOT gates.}
  \label{tab:truth}
\end{table}

We now formally introduce the notion of \emph{hazard}. 

\begin{defn}
  A string $b\in \{0, 1\}^n$ is called a \emph{resolution} of a
  string $a\in \{u, 0, 1\}^n$ if $b$ can be obtained from $a$ by
  only changing the unstable values in $a$ to stable values.
\end{defn}
For example, strings $0100$, $0110$, $1110$, and $1100$ are all possible resolutions of the string $u1u0$, but $0111$ is not.
\begin{defn}
  A circuit $C$ implementing a Boolean function has a
  \emph{$1$-hazard} (or,~\emph{$0$-hazard}) on an input $a\in \{u, 0, 1\}^n$
  if and only if $C(a) = u$ yet for all resolutions $b$ of
  $a$, the value $C(b)$ is $1$ (resp.,~$0$). A circuit has a
  \emph{hazard} if it has a $1$-hazard or a $0$-hazard.
\end{defn}
\begin{ex}
  \label{ex:hazards}
  Consider the $\DNF$ formula $F = (x_1 \wedge x_2) \vee (\neg
  x_1 \wedge x_2) \vee (\neg x_1 \wedge \neg x_2)$ implementing
  the function $f$ that evaluates to $0$ only when $x_1 = 1$ and
  $x_2 = 0$. Consider the input $x_1x_2 = 0u$. The function $f$
  evaluates to $1$ on both resolutions of $0u$. But, the formula
  $F$ evaluates to $u$ on input $x_1 = 0$, $x_2 = u$. Therefore,
  $F$ has a $1$-hazard at the input $0u$. We note that $u1$ is another
  input where $F$ has a hazard.
\end{ex}
We remark that being hazard-free or not is a property of the formula or circuit and not a property of the function being computed by them. 

In this paper, we are interested in the time complexity of the
following language.
\begin{defn}
  The language \HAZARD\ consists of all circuits that have
  hazards.
\end{defn}
The hazards in a circuit implementing a function $f$ are closely
related to the minterms and maxterms of $f$. A definition of
these concepts and their relationship to hazards in circuits
follows.

\begin{defn}
  \label{defn:implicants}

    A \emph{$1$-implicant} (\emph{$0$-implicant}) of a Boolean
    function $f$ on variables $x_1, \dotsc, x_n$ is an AND
    (resp.,~OR) over a subset $I$ of literals $x_1,\dotsc,
    x_n$,$\neg{x}_1, \ldots , \neg{x}_n$ such that for any
    assignment $a\in\{0, 1\}^n$, if $I(a) = 1$ (resp.,~$I(a) =
    0$),\footnote{By $I(a)$, we mean the AND (or, OR) function
    over the set $I$ of literals evaluated at $a$. We often
    overload notation to denote both the set of literals in an
    implicant and the function by the same notation $I$.} then
    $f(a) = 1$ (resp.,~$f(a) = 0$).  In such a case the
    assignment $a$ is said to be \emph{covered} by the implicant
    $I$. The \emph{size} of an implicant is defined to be the
    size of the set $I$.

\end{defn}

\begin{ex}
  \label{ex:implicants}
  Consider the function $f$ from Example~\ref{ex:hazards}.
  The only $0$-implicant of $f$ is $\neg x_1 \vee x_2$ and it is of size $2$.
  The function has five $1$-implicants $x_1 \wedge x_2$,
  $\neg x_1 \wedge x_2$, $\neg x_1 \wedge \neg x_2$, $x_2$, and
  $\neg x_1$. The assignments $x_1 = 0$, $x_2 = 0$ and $x_1 = 0$,
  $x_2 = 1$ are covered by the $1$-implicant $\neg x_1$.
\end{ex}

\begin{defn}
  A $1$-implicant ($0$-implicant) that is minimal with respect to set
  containment is called a \emph{minterm} (resp.,~\emph{maxterm}).
\end{defn}

\begin{ex}
  \label{ex:minterm-maxterm}
  Continuing from Example~\ref{ex:implicants}, we note that $f$ 
  has one maxterm $\neg x_1 \vee x_2$ and two minterms,
  namely $\neg x_1$ and $x_2$.
\end{ex}

We have the following well-known \emph{cross-intersection} property of
the set of all minterms and the set of all maxterms. 

\begin{fact}
  \label{fact:min-max-intersection}
  Let $S$ be any minterm and $T$ be any maxterm for a function $f$.
  Then, $S \cap T \neq \emptyset$.  
\end{fact}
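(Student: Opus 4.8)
The plan is to argue by contradiction. Assume $S \cap T = \emptyset$ as sets of literals, and then exhibit a single assignment $a \in \{0,1\}^n$ that simultaneously satisfies $S(a) = 1$ and $T(a) = 0$. This is immediately contradictory: since $S$ is a $1$-implicant, $S(a) = 1$ forces $f(a) = 1$, while since $T$ is a $0$-implicant, $T(a) = 0$ forces $f(a) = 0$. So the whole task reduces to constructing such an $a$.

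First I would record the constraints that each of the two requirements places on $a$. As $S$ is an AND of literals, $S(a) = 1$ demands that every literal of $S$ evaluate to $1$, i.e.\ $x_i = 1$ for each $x_i \in S$ and $x_i = 0$ for each $\neg x_i \in S$. Dually, as $T$ is an OR of literals, $T(a) = 0$ demands that every literal of $T$ evaluate to $0$, i.e.\ $x_j = 0$ for each $x_j \in T$ and $x_j = 1$ for each $\neg x_j \in T$. Any variable not mentioned in $S$ or $T$ may be set arbitrarily.

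The heart of the argument, and the step that deserves the most care, is verifying that these two families of constraints are jointly satisfiable whenever $S$ and $T$ share no literal. I would do a short case analysis on any variable $x_k$ that is constrained by both $S$ and $T$: if the two constraints disagree on the value of $x_k$, then either $x_k$ lies in both $S$ and $T$, or $\neg x_k$ lies in both, and in either case $S \cap T \neq \emptyset$, contradicting our assumption. The one place where the naive intuition could mislead is a variable appearing with \emph{opposite} polarities in $S$ and $T$; such a variable actually imposes the \emph{same} value from both sides and hence causes no conflict. Thus, under $S \cap T = \emptyset$ the constraints are consistent, the desired assignment $a$ exists, and the contradiction is complete.

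Finally I would remark that minimality is never invoked: the argument uses only the defining implicant properties of $S$ and $T$, so the same proof establishes the cross-intersection property for an arbitrary $1$-implicant $S$ and an arbitrary $0$-implicant $T$, with minterms and maxterms being the special case.
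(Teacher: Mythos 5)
Your proposal is correct and follows essentially the same route as the paper: assume $S \cap T = \emptyset$, produce an assignment $a$ with $S(a)=1$ and $T(a)=0$, and derive the contradiction $f(a)=1=0$ from the implicant definitions. The only difference is that you spell out the consistency of the combined constraints (correctly noting that opposite-polarity occurrences cause no conflict), a step the paper leaves implicit.
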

\begin{proof}
  Suppose not, then there exists an assignment $a$ such that $S(a) =1$
  and $T(a) = 0$. But then from Definition~\ref{defn:implicants}
  we have $f(a) = 1$ as well as $f(a) = 0$, which is a contradiction. \qed
\end{proof}

An implicant can be naturally represented as an assignment of
variables to $\{u,0,1\}$ where the variables not in the
implicant are set to $u$ and the ones present are set to $0$ or
$1$ so as to make the corresponding literal evaluate to $1$ for
$1$-implicants (or, $0$ for $0$-implicants). By evaluating a
circuit at a minterm or maxterm, we mean evaluating the circuit
on the corresponding assignment in $\{u,0,1\}^n$.

\begin{obs}
A circuit $C$ implementing a Boolean function $f$ has a
$1$-hazard ($0$-hazard) if and only if it has hazard at a
minterm (resp.,~maxterm).
\end{obs}
\begin{proof}
  Given an input $a$ at which $C$ has hazard,
  consider a minterm or maxterm that covers $a$.  It is easily seen that 
  the output of evaluating $C$ on this minterm or maxterm is $u$, because
  changing stable values in the input to $u$ cannot cause the
  output to go from $u$ to a stable value. \qed
\end{proof}

Since any minterm or maxterm of an $n$-variable Boolean function can be
represented by a string from $\{u,0,1\}^n$, there can be at
most $3^n$ minterms for a Boolean function. How tight is this
upper bound? Chandra and Markowsky \cite{CM78} gave an improved upper bound
of $O(3^n/\sqrt{n})$ and also gave an example to show that this upper bound
is almost tight. We recall the function witnessing this lower bound now.
We call it the Chandra-Markowsky (\CM) function.

The Chandra-Markowsky (\CM) function
\cite{CM78} on $N = 3n$ variables for any natural $n$ is defined
as follows: it evaluates to $1$ if and only if at least $n$ of
the variables are set to $1$ and at least $n$ of the variables
are set to $0$. This function has $\binom{3n}{n}\binom{2n}{n}
=\Theta(3^N/N)$ minterms.  Therefore, it has
almost the maximum possible number of minterms.

The Strong Exponential Time Hypothesis (\SETH) is a conjecture
introduced by Impagliazzo, Paturi and Zane \cite{IP01, IPZ01} to
address the time complexity of the \CNF\ satisfiability problem
(\CNFSAT).  It has been used to establish conditional lower
bounds for many $\NP$-complete problems (e.g., \cite{LMS11,
Cygan12}) and problems with polynomial time algorithms (e.g.,
\cite{Bringmann15, Abboud15, VWilliams18}).

\begin{hyp}[\SETH\ \cite{IP01, IPZ01}]
  For every $\varepsilon >0$, there exists an integer $k \geq 3$
  such that no algorithm can solve $k$-\CNFSAT \footnote{Every clause in the \CNF\ is defined on at most $k$ literals.} on $n$
  variables in $O(2^{(1-\varepsilon)n})$ time. 
\end{hyp}

To establish our lower bound, we reduce from the \DNF\
falsifiability problem ($\DNFFALSE$): \emph{Given a \DNF\ formula
as an input, determine whether there exists an assignment that
falsifies it.}

This problem clearly has the same time complexity as the
$\CNFSAT$ problem. \SETH\ implies that there is no
$O(2^{(1-\epsilon)n} \poly(s))$ time algorithm ,for any $\epsilon
> 0$, for \DNFFALSE\ where $n$ is the number of variables and $s$
is the number of clauses.

\section{A tight lower bound}

The idea behind the proof is as follows: The given \DNF\ formula
$F$ on $n$ variables has $2^n$ assignments. We construct a
formula $F'$ on $m \sim \log_3(2) n$ variables that implements a
function with more than $2^n$ minterms. This allows us to map
each assignment of variables in $F$ to a distinct minterm of
the function implemented by $F'$. We then show that $F$ is
falsifiable by an assignment $a$ if and only if the formula $F'$
has a hazard at the minterm $b$ that corresponds to $a$ in the
mapping. First, we define the function that is going to be
implemented by $F'$ and prove some important properties related
to it.

Let $s$ be any natural number that is a multiple of $3$.  We now
define the auxiliary function $\ACM$ that will be used in our
reduction. It is defined on $sn$ variables which are partitioned
into $n$ groups of $s$ variables each.  We simply compose the
\textsc{AND} function on $n$ variables with the $\CM$ function on
$s$ variables to define the $\ACM$ function.  That is,
\begin{align} \label{eq:ACM}
\ACM(X_1, \dotsc, X_n) = \CM(X_1) \AND \dotsm \AND \CM(X_n)
\end{align}
where $X_i$ denotes the $i$-th group of $s$ variables.  We denote
the $\CM$ function on the $i$-th group of variables $X_i$ by
$\CM_i$.

The following proposition characterizes the minterms and maxterms
of the $\ACM$ function.

\begin{prop}
  \label{prop:minterm-maxterm-ACM}

  The following statements are true:

  \begin{enumerate}

  \item[(i)] The set of minterms of $\ACM$ is the direct product
  of the set of minterms of the $n$ disjoint $s$-variable $\CM$
  functions.

  \item[(ii)] The set of maxterms of $\ACM$ is given by the union
  of the set of maxterms of $\CM_i$ for $1\leq i \leq n$. 

  \end{enumerate}

\end{prop}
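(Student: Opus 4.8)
The plan is to exploit the disjointness of the $n$ variable groups so that every implicant of $\ACM$ splits cleanly along the groups. Throughout I work with \emph{consistent} implicants, i.e.\ those that never contain both a variable and its negation; this matches the representation of minterms and maxterms as points of $\{u,0,1\}^n$ used earlier in the paper. For any set of literals $I$ write $I = I_1 \cup \dots \cup I_n$, where $I_i$ collects the literals of $I$ on the variables of $X_i$. The structural fact I will keep using is that each $\CM_i$ is neither identically $0$ nor identically $1$, so by a suitable assignment to $X_i$ it can be forced to either value independently of the other groups.

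For part (i), the first step is to prove the key equivalence: a conjunction $I = I_1 \cup \dots \cup I_n$ is a $1$-implicant of $\ACM$ if and only if each $I_i$ is a $1$-implicant of $\CM_i$. The backward direction is immediate, since if $I(a) = 1$ then every conjunct $I_i(a) = 1$, whence $\CM_i(a) = 1$ for all $i$ and $\ACM(a) = 1$. For the forward direction I argue contrapositively: if some $I_i$ is not a $1$-implicant of $\CM_i$, pick $a_i$ with $I_i(a_i) = 1$ but $\CM_i(a_i) = 0$, and for every other group pick $a_j$ satisfying $I_j$ (possible because $I$ is consistent); the combined assignment satisfies $I$ yet falsifies $\ACM$. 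The second step lifts this to minterms: since removing a literal from $I$ removes it from exactly one $I_i$, the equivalence shows that $I$ is minimal as a $1$-implicant of $\ACM$ exactly when every $I_i$ is minimal as a $1$-implicant of $\CM_i$. Hence the minterms of $\ACM$ are precisely the unions $I_1 \cup \dots \cup I_n$ with each $I_i$ a minterm of $\CM_i$, which is the claimed direct product.

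For part (ii), the core claim is that any $0$-implicant $T = T_1 \cup \dots \cup T_n$ of $\ACM$ must already contain, within a single group, a $0$-implicant of some $\CM_i$. I prove its contrapositive: if no $T_i$ is a $0$-implicant of $\CM_i$, then for each $i$ there is $a^{(i)}$ with $T_i(a^{(i)}) = 0$ and $\CM_i(a^{(i)}) = 1$ (for an empty $T_i$ any satisfying assignment works); assembling the $a^{(i)}$ gives an assignment on which $T$ evaluates to $0$ while $\ACM$ evaluates to $1$, contradicting that $T$ is a $0$-implicant. Now let $T$ be a maxterm of $\ACM$. The claim produces some $T_i$ that is a $0$-implicant of $\CM_i$, and any $0$-implicant of $\CM_i$ is in turn a $0$-implicant of $\ACM$ (if all its literals are false then $\CM_i$, and hence $\ACM$, is $0$). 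Minimality of $T$ therefore forces $T = T_i$, so $T$ lives in one group and is a minimal $0$-implicant of $\CM_i$, i.e.\ a maxterm of $\CM_i$. Conversely, every maxterm $T$ of a fixed $\CM_i$ is a maxterm of $\ACM$: it is a $0$-implicant of $\ACM$ by the observation just made, and it is minimal because any proper subset that were a $0$-implicant of $\ACM$ would, on setting the other groups to satisfy their $\CM_j$, be forced to be a $0$-implicant of $\CM_i$ --- contradicting that $T$ is a maxterm of $\CM_i$. Together these give the union characterization.

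I expect the main obstacle to be the converse in part (ii): showing that a maxterm of $\ACM$ cannot spread its literals across several groups. The subtlety is that $\ACM(a) = 0$ only requires \emph{some} $\CM_i(a) = 0$, so a priori a $0$-implicant could cooperate across groups; the concentration argument relies essentially on being able to independently set each untouched group to satisfy its $\CM_j$, which is exactly where the disjointness of the groups and the non-triviality of $\CM$ enter. The remaining steps are routine once the per-group equivalences and the single-group concentration are in hand.
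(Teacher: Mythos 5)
Your proof is correct and follows essentially the same route as the paper: decompose each implicant along the disjoint variable groups, show $1$-implicants of $\ACM$ factor group-by-group, and show $0$-implicants must concentrate in a single group. The only difference is one of detail --- your explicit contrapositive argument in part (ii) (assembling per-group assignments $a^{(i)}$ to show some $T_i$ must already be a $0$-implicant of $\CM_i$) carefully justifies a step the paper dispatches with ``clearly.''
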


\begin{proof}

  \phantom{We show}

  \begin{enumerate}

  \item[\textit{(i)}] We show a one-to-one correspondence between
  the set of minterms of $\ACM$ and the direct product of the set
  of minterms of $\CM_j$ for $j \in [n]$.  For a minterm $I$ of
  $\ACM$, let $I_j$ be the restriction of $I$ to the variables in
  $X_j$ for each $j$. Since $\ACM$ evaluates to $1$ on $I$,
  $\CM_j$ must evaluate to $1$ on $I_j$.  Thus, $I_j$ is a
  $1$-implicant of $\CM_j$ for each $j$. We now argue that in
  fact it is a minterm. Suppose not, then there exists a $j$ such
  that $I_j$ is not a minterm of $\CM_j$. However it must contain
  a minterm, since it is a $1$-implicant. Let $I'_j \subset I_j$
  be the minterm contained in $I_j$. By replacing the part of
  $I_j$ in $I$ by $I'_j$ we obtain $I'$. Clearly, $I' \subset I$
  is a $1$-implicant of $\ACM$. Thus, we have a contradiction to
  the fact that $I$ is a minterm.
    
  On the other hand, given minterms $I_j$ of $\CM_j$ for each $j
  \in [n]$, their union $I = \cup_{j \in [n]}I_j$ is a minterm of
  $\ACM$. Suppose not, then there exists $I' \subset I$ that is a
  minterm. Since $I' \subset I$, then there exists $j \in [n]$
  such that $I'_j \subset I_j$. Thus, we obtain a contradiction
  to $I_j$ being a minterm of $\CM_j$.

  \item[\textit{(ii)}] For some $j \in [n]$, let $I$ be a maxterm
  of $\CM_j$. Then it is easily seen that $I$ is also a maxterm
  of $\ACM$.  To prove the other direction, for a maxterm $I$ of
  $\ACM$, we argue that there exists a unique $j \in [n]$ such
  that $I$ is a maxterm of $\CM_j$.  Clearly, there exists a $j
  \in [n]$ such that $I$ is a $0$-implicant of $\CM_j$. Since $I$
  is a maxterm of $\ACM$, it must only set variables in $X_j$.
  And therefore, the term $I$ is a $0$-implicant of the unique
  $\CM_j$.  Hence, the term $I$ must also be a maxterm of
  $\CM_j$.

  \end{enumerate}\qed
  
\end{proof}

Huffman \cite{Huf:57} showed that for any Boolean function $f$
with the set $\mathcal{M}$ of all minterms, the \DNF\ $F =
\bigvee_{I\in \mathcal{M}}I$ is hazard-free.  For example,
consider the function $f$ defined in Example~\ref{ex:hazards}.
From Example~\ref{ex:minterm-maxterm} we know that $\neg x_1$ and $x_2$ are
the only two minterms of it. Therefore, $(\neg x_1) \vee (x_2)$
is the hazard-free DNF implementation for $f$ given by Huffman's construction.

In our reduction,
it will be crucial for us to be able to introduce hazards to the
implementation at specific minterms.  For this purpose we modify
Huffman's hazard-free \DNF\ construction as follows.

\begin{prop}
  \label{prop:hazard-in-f}
  Let $f$ be a function on $n$ variables and
  $S$ be a set of minterms of $f$ where 
  each minterm in $S$ is of size at most $n-1$.
  Then, we can construct a \DNF\ for $f$ that has
  hazards exactly at the minterms in $S$.
\end{prop}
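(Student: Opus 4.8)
The plan is to start from Huffman's hazard-free \DNF\ and perturb it locally at each minterm in $S$. Let $\mathcal{M}$ be the set of all minterms of $f$, so that $F = \bigvee_{I \in \mathcal{M}} I$ is hazard-free. For each $I \in S$, the size hypothesis $|I| \le n-1$ guarantees a variable $x_{k(I)}$ that does not occur in $I$; I would replace the single term $I$ by the two terms $I \wedge x_{k(I)}$ and $I \wedge \neg x_{k(I)}$, while leaving every term $I \notin S$ untouched. Call the resulting \DNF\ $F'$. Since $I \equiv (I \wedge x_{k(I)}) \vee (I \wedge \neg x_{k(I)})$ as Boolean functions on stable inputs, $F'$ still computes $f$.

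The crux is to evaluate $F'$ at the assignment $a_I \in \{u,0,1\}^n$ corresponding to a minterm $I$ (its variables set to satisfy their literals, the rest set to $u$) and show the output is $u$ precisely when $I \in S$. The key combinatorial fact I would establish first is that no term coming from a minterm $J \neq I$ can evaluate to $1$ at $a_I$: for such a term to be $1$, every literal of $J$ must be stably true, which forces every variable of $J$ to lie in $I$ and every literal of $J$ to agree with $I$, i.e. $J \subseteq I$; minimality of the minterms then forces $J = I$, a contradiction. Now if $I \in S$, its two replacement terms both evaluate to $1 \wedge u = u$, while every other term evaluates to $0$ or $u$, so the top OR outputs $u$; as $I$ is a $1$-implicant every resolution gives $f=1$, and we obtain a $1$-hazard at $I$. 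If $I \notin S$, the untouched term $I$ evaluates to $1$, so $F'(a_I) = 1$ and there is no hazard.

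It remains to rule out any other hazards, and by the Observation it suffices to examine maxterms for $0$-hazards, since $1$-hazards can occur only at minterms and all minterms are handled above. At the assignment $a_T$ of a maxterm $T$, the cross-intersection property (Fact~\ref{fact:min-max-intersection}) ensures that every minterm $I$ shares a literal with $T$, and that shared literal is set to $0$ under $a_T$; hence every original term, and therefore each split term $I \wedge x_{k(I)}$ and $I \wedge \neg x_{k(I)}$, contains a falsified literal and evaluates to $0$. Thus $F'(a_T) = 0$ and no $0$-hazard is introduced. Combining the three cases shows that $F'$ has hazards exactly at the minterms in $S$. I expect the main obstacle to be the bookkeeping in the first two cases---specifically, verifying that splitting a term into two $u$-valued terms genuinely flips the output from $1$ to $u$ without some competing term covering $a_I$ with a stable $1$; the minterm-minimality argument above is what makes this clean, and the size-$(n-1)$ hypothesis is exactly what supplies the spare variable $x_{k(I)}$ needed to perform the split.
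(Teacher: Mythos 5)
Your proposal is correct and takes essentially the same route as the paper: start from Huffman's hazard-free \DNF, split each term $I \in S$ into $I \wedge x$ and $I \wedge \neg x$ using the spare variable guaranteed by $|I| \le n-1$, and then check behaviour at minterms and maxterms. You simply spell out in more detail the two facts the paper asserts tersely, namely that minimality prevents any competing term from evaluating to $1$ at $a_I$, and that the cross-intersection property forces every term to $0$ at a maxterm so no $0$-hazard is introduced.
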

\begin{proof}
  Let $F$ be the hazard-free \DNF\ for $f$ given by Huffman's
  construction. Let $I$ be a minterm in $S$ and $x$ be a variable not
  in $I$. Such a variable exists by assumption.
  Consider the formula $F'$ obtained by replacing 
  the term $I$ in $F$ with two new terms,
  namely $I \AND x$ and $I \AND \bar{x}$. 
  $F'$ computes the same function and has a $1$-hazard at the minterm
  $I$, since the two new terms evaluate to $u$ on $I$ and every
  other term will evaluate to $0$ or $u$ on $I$. For any minterm
  not in $S$, $F'$ evaluates to $1$. Therefore, these
  are the only $1$-hazards. Also, $F'$ has no
  $0$-hazards, because every maxterm and minterm intersects contradictorily.

  We repeat the aforementioned transformation for every minterm in $S$
  to obtain the required \DNF\ for $f$ that has hazards 
  at the minterms in $S$. \qed
\end{proof}

To illustrate we consider our running example, the function $f$ from
Example~\ref{ex:hazards}. We know that $(\neg x_1) \vee (x_2)$
is a hazard-free DNF of $f$. Suppose we want to selectively introduce hazard
only at the minterm $\neg x_1$. Following Proposition~\ref{prop:hazard-in-f},
we modify the hazard-free representation to obtain the following:
\[(\neg x_1 \wedge x_2) \vee (\neg x_1 \wedge \neg x_2) \vee (x_2).\]
Suppose we further wanted to introduce hazard at the minterm $x_2$.
Then, again following Proposition~\ref{prop:hazard-in-f}, we obtain
\[(\neg x_1 \wedge x_2) \vee (\neg x_1 \wedge \neg x_2) \vee (x_2 \wedge x_1),\]
which has hazards at both the minterms. We note that this is
the DNF implementation from Example~\ref{ex:hazards}.

The following lemma applies the above construction to the $\ACM$
function to efficiently introduce hazards in a selective manner. Notice that
any minterm of $\ACM$ has some variable that is not in the
minterm.

\begin{lem}
  \label{lem:hazard-in-ACM}
  Consider the $\ACM$ function on $sn$ variables where $s$ is
  regarded as a constant. For $j \in [n]$, let $\mathcal{M}_j$ be
  the set of all minterms of $\CM_j$.  Further, let $\mathcal{S}
  \subseteq \mathcal{M}_i$ for some $i$.  Then, there is a
  poly-time algorithm that constructs an $\AND\OR\AND$ formula
  for $\ACM$ that has hazards exactly at minterms in the set
  $\mathcal{M}_1 \times \dotsm \times \mathcal{M}_{i-1} \times
  \mathcal{S} \times \mathcal{M}_{i+1}\times \dotsm \times
  \mathcal{M}_n$.
\end{lem}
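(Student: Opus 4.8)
The plan is to exploit the product structure of the minterms of $\ACM$ (Proposition~\ref{prop:minterm-maxterm-ACM}) so that, rather than introducing hazards one minterm at a time as in Proposition~\ref{prop:hazard-in-f}, I localize all of the work to a single $\CM$ block. A direct application of Proposition~\ref{prop:hazard-in-f} to $\ACM$ with the target minterm set would be fatal: that set has size $|\mathcal{S}| \cdot \prod_{j \neq i} |\mathcal{M}_j|$, which is exponential in $n$, so the resulting \DNF\ would be exponentially large. The key point is that every minterm we want to spoil shares the same behaviour in its $i$-th coordinate, so a modification confined to the $i$-th factor should suffice.

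Concretely, I would start from the $\AND\OR\AND$ formula $F_0 = \bigwedge_{j=1}^n \big(\bigvee_{I \in \mathcal{M}_j} I\big)$, i.e.\ the top $\AND$ of Huffman's hazard-free \DNF s for the individual $\CM_j$. I then apply Proposition~\ref{prop:hazard-in-f} to the single function $\CM_i$ on its $s$ variables, with minterm set $\mathcal{S}$; this is legitimate because every minterm of $\CM$ on $s$ variables sets exactly $2s/3 < s$ variables, hence has size at most $s-1$. This yields a \DNF\ $\CM_i'$ that still computes $\CM_i$, has $1$-hazards exactly at the minterms in $\mathcal{S}$, and (as noted in the proof of Proposition~\ref{prop:hazard-in-f}) has no $0$-hazards. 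Replacing the $i$-th factor of $F_0$ by $\CM_i'$ gives the candidate formula $F$. It is again of the form $\AND\OR\AND$, and since $s$ is constant each block has constant size, so $F$ has size $O(n)$ and is built in polynomial time.

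For correctness I would analyze $1$-hazards and $0$-hazards separately, which by the observation that $1$-hazards occur only at minterms and $0$-hazards only at maxterms reduces to evaluating $F$ at the minterms and maxterms of $\ACM$. By Proposition~\ref{prop:minterm-maxterm-ACM}(i) every minterm of $\ACM$ is a product $I_1 \cup \dotsm \cup I_n$ with $I_j \in \mathcal{M}_j$. Each unmodified factor $\CM_j$ evaluates to $1$ there, because the term $I_j$ is present in its \DNF\ and is satisfied; the same holds for the $i$-th factor when $I_i \notin \mathcal{S}$, since that term was left untouched. When $I_i \in \mathcal{S}$ the $i$-th factor evaluates to $u$ (a $1$-hazard of $\CM_i'$). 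Hence the top $\AND$ yields $u$ exactly when $I_i \in \mathcal{S}$, so $F$ has $1$-hazards exactly at $\mathcal{M}_1 \times \dotsm \times \mathcal{S} \times \dotsm \times \mathcal{M}_n$. For $0$-hazards, Proposition~\ref{prop:minterm-maxterm-ACM}(ii) says every maxterm of $\ACM$ is a maxterm of some single $\CM_k$; at such an input that block evaluates to the stable value $0$ (no factor has a $0$-hazard), and a single $0$ forces the top $\AND$ to $0$, so no $0$-hazard arises.

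The step I expect to require the most care is the propagation through the top $\AND$ gate. For the $1$-hazard direction I must confirm that the remaining $n-1$ factors evaluate to the stable value $1$ (not $u$) at the product minterm, so that the lone $u$ coming from $\CM_i'$ survives the $\AND$ rather than being absorbed; and for the $0$-hazard direction I must check that the $u$-values produced by the ``idle'' blocks on a maxterm input (where all their variables are set to $u$) cannot combine with the $0$ from the active block to yield an unstable output. Both points reduce to the asymmetric absorption rules of the ternary $\AND$ ($x \AND 1 = x$ but $0 \AND x = 0$), together with the fact used above that a hazard-free block outputs the stable value $1$ at each of its own minterms and $0$ at each of its maxterms. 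The genuinely new idea beyond Propositions~\ref{prop:minterm-maxterm-ACM} and~\ref{prop:hazard-in-f} is the localization to a single block, which is exactly what keeps the formula polynomial in size.
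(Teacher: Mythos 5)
Your construction is exactly the paper's: take Huffman's hazard-free \DNF\ for each $\CM_j$ with $j\neq i$, apply Proposition~\ref{prop:hazard-in-f} only to the $i$-th block with minterm set $\mathcal{S}$, and combine with a top $\AND$, then verify hazard propagation via Proposition~\ref{prop:minterm-maxterm-ACM}. The argument is correct and follows essentially the same route as the paper (you are in fact slightly more explicit about why the unmodified blocks output a stable $1$ at a product minterm and about the ``exactly at'' direction).
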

\begin{proof}
  Let $F_j$ be the hazard-free \DNF\ formula for $\CM_j$ and
  $F'_i$ be the \DNF\ formula for $\CM_i$ that has hazards only
  at minterms in the set $\mathcal{S}$ obtained by
  Proposition~\ref{prop:hazard-in-f}. We output the $\AND\OR\AND$
  formula \((\AND_{j \neq i} F_j) \AND F'_i \) for $\ACM$. The
  size of the formula is $O(n)$ because $s$ is a constant.

  We now argue that this formula has hazards only at minterms in
  the set $\mathcal{M}_1 \times \dotsm \times \mathcal{M}_{i-1}
  \times \mathcal{S} \times \mathcal{M}_{i+1}\times \dotsm \times
  \mathcal{M}_n$.  Since the individual implementation of
  $\CM_j$'s have no $0$-hazards, by
  Proposition~\ref{prop:minterm-maxterm-ACM}~$(ii)$, the formula
  for $\ACM$ has no $0$-hazards.  Now suppose $I$ is a minterm of
  $\ACM$ such that the $\AND\OR\AND$ formula has a hazard at it.
  By Proposition~\ref{prop:minterm-maxterm-ACM}~$(i)$, we know
  that $I_j$ is a minterm of $\CM_j$ for each $j$.  Therefore,
  there exists a $j$ such that the \DNF\ implementation of
  $\CM_j$ has a hazard at $I_j$. But then by construction
  it must be that $j = i$ and $I_j \in \mathcal{S}$. \qed
\end{proof}

We now prove our main theorem.

\begin{thm}
  \label{thm:main}
  If \SETH\ is true, then for any $\epsilon > 0$, there is no
  algorithm for \HAZARD\ that runs in time
  $O({3}^{(1-\epsilon)n}\poly(s))$, even when the inputs are
  formulas of depth four. Here $n$ is the number of variables in
  the formula and $s$ is the size of the formula.
\end{thm}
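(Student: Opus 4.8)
\section{Proof proposal for Theorem~\ref{thm:main}}

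The plan is to reduce \DNFFALSE\ to \HAZARD\ restricted to depth-four formulas, so tightly that a hypothetical $O(3^{(1-\epsilon)n}\poly(s))$ hazard detector would falsify DNFs in time $O(2^{(1-\epsilon')n}\poly(s))$. Given a \DNF\ $\phi = T_1 \vee \dotsm \vee T_M$ on $n$ variables, I would first fix a large constant $s$ that is a multiple of $3$ and let $\mathcal{M}$ be the minterm set of \CM\ on $s$ variables, so $|\mathcal{M}| = \Theta(3^s/\sqrt{s})$. Setting $c = \lfloor \log_2|\mathcal{M}|\rfloor$, I partition the $n$ variables of $\phi$ into $k = \lceil n/c\rceil$ blocks of at most $c$ variables each. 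Since $|\mathcal{M}| \ge 2^c$, I can fix for each block $B$ a surjection from the minterms of $\CM_B$ onto the $2^c$ assignments of that block's variables; by Proposition~\ref{prop:minterm-maxterm-ACM}(i), taking products then makes every minterm $b$ of the $sk$-variable $\ACM$ function encode a full assignment $a(b)\in\{0,1\}^n$ of $\phi$, with every assignment arising this way. The goal is a depth-four formula $H$ implementing $\ACM$ whose hazards sit exactly at the minterms $b$ with $\phi(a(b))=0$, so that $H\in\HAZARD$ iff $\phi$ is falsifiable.

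The core is a way to sculpt the hazard set with Boolean connectives. I would first record a consequence of Kleene monotonicity (the same principle used in the Observation: replacing stable inputs by $u$ cannot turn a stable output unstable): if formulas $P_1,\dotsc,P_r$ all implement a function $g$ and $b$ is a minterm of $g$, then each $P_i(b)\in\{u,1\}$, whence $(\bigvee_i P_i)(b)=u$ iff every $P_i(b)=u$, while $(\bigwedge_i P_i)(b)=u$ iff some $P_i(b)=u$. Thus over a common implementation $\OR$ intersects hazard sets at minterms and $\AND$ unions them, and the dual computation at a maxterm shows that neither connective creates a $0$-hazard when the $P_i$ have none. Now a single literal $\ell_{t,j}$ of $\phi$ constrains one variable, hence one block $B$, and the set $\mathcal{S}_B\subseteq\mathcal{M}_B$ of $\CM_B$-minterms whose block-assignment falsifies $\ell_{t,j}$ is a legal input to Lemma~\ref{lem:hazard-in-ACM}; that lemma produces in polynomial time an $\AND\OR\AND$ formula $G_{t,j}$ for $\ACM$ with hazards exactly at $\{\,b : \ell_{t,j}(a(b))=0\,\}$. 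Setting $G_t=\bigwedge_j G_{t,j}$ and $H=\bigvee_t G_t$, the combination rules give the hazard set of $H$ as $\bigcap_t\bigcup_j\{\,b:\ell_{t,j}(a(b))=0\,\}=\{\,b:\phi(a(b))=0\,\}$, exactly as required. Merging adjacent gates of equal type shows $H$ is an $\OR\AND\OR\AND$ formula, i.e.\ of depth four, and of size $\poly(n,M)$.

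Finally I would convert the size accounting into the stated bound. The formula $H$ has $m=sk\le s\lceil n/c\rceil$ variables, and since $c \ge s\log_2 3 - O(\log s)$, choosing the constant $s$ large forces $m\le(\log_3 2+\delta)n$ for any prescribed $\delta>0$. An $O(3^{(1-\epsilon)m}\poly(s'))$ algorithm for \HAZARD\ with $s'=|H|=\poly(n,M)$ would then decide \DNFFALSE\ in time $O\!\left(3^{(1-\epsilon)(\log_3 2+\delta)n}\poly(n,M)\right)=O\!\left(2^{(1-\epsilon)(1+\delta\log_2 3)n}\poly(M)\right)$, and taking $\delta$ small enough in terms of $\epsilon$ makes the exponent $(1-\epsilon')n$ with $\epsilon'>0$, contradicting the \SETH-based lower bound for \DNFFALSE. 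The step I expect to be the main obstacle is precisely the hazard-combination lemma: establishing that over a common implementation $\OR$ realizes intersection and $\AND$ realizes union of hazard sets at minterms, and dually that neither connective introduces a $0$-hazard at any maxterm. Once this is in hand, the surjective block-encoding guarantees that \emph{every} minterm of $H$ is a hazard exactly when its assignment falsifies $\phi$, and all that remains is the routine size bookkeeping and exponent calculation above.
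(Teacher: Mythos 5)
Your proposal is correct and follows essentially the same route as the paper: encode each block of the \DNF's variables by minterms of a constant-size \CM\ function, replace each literal by an $\AND\OR\AND$ implementation of $\ACM$ with hazards exactly where that literal is falsified (Lemma~\ref{lem:hazard-in-ACM}), observe that at a minterm $\AND$ takes unions and $\OR$ takes intersections of hazard sets while no $0$-hazards arise, collapse to an $\OR\AND\OR\AND$ formula, and run the same $\log_3 2$ exponent bookkeeping. The only deviations are cosmetic --- you fix the \CM\ block size and derive the number of $\phi$-variables per block (the paper does the reverse), and your surjection from minterms onto assignments makes the backward direction of the correctness argument slightly cleaner than the paper's injective encoding.
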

\begin{proof}
  Let $r$ be a positive integer and $s = s(r)$ be the minimum
  integer  such that $2^r \leq \binom{s}{s/3}\binom{2s/3}{s/3}$.
  We will reduce \DNFFALSE\
  instances on $rn$ variables to instances of \HAZARD\ on $sn$
  variables. In addition, the circuit we output will be an
  $\OR \AND \OR \AND$ formula.  Recall, by our choice, $s$ is a
  multiple of $3$.  For any $\epsilon > 0$, we claim that there
  exists a $\delta > 0$ such that $3^{(1-\epsilon)s} <
  2^{(1-\delta)r}$ for sufficiently large $r$. Let $f(s)$ be the
  number of minterms in the $s$-variable \CM\ function. Then
  $f(s+3) / f(s) \to 27$ as $s\to \infty$. As increasing the
  number of variables by $3$ multiplies the number of assignments
  by $8$, we have $s(r)/r \to \log_{27}(8) = \log_3(2)$ as $r\to
  \infty$.  The claim follows.

  Let $F$ be the input \DNF\ on $rn$ variables. We consider the
  variables of $F$ to be partitioned into $n$ groups $Y_j$, $j
  \in [n]$, of $r$ variables each.  We arbitrarily associate with
  every assignment $\alpha \in \{0,1\}^r$ to the variables in
  $Y_j$ a unique minterm $I_\alpha$ of $\CM_j$ and call this
  bijection $\beta_j$. Recall that $s(r)$ is defined such that
  the number of minterms of $\CM_j$ is at least $2^r$. The
  mapping $\beta_j$ is constant-sized and can be computed easily
  given $j$. The reduction is given in Algorithm~\ref{alg:main}.
  It is easy to see that the algorithm runs in polynomial time and produces
  formulas of depth four. We now argue the correctness of the
  reduction.

  \begin{algorithm}
    \caption{Reduction from \DNFFALSE\ to \HAZARD}
    \label{alg:main}
    \begin{algorithmic}
        \State $F' \leftarrow F$

        \ForAll {literals $\ell$ occurring in $F$}

          \State Replace $\ell$ in $F'$ with LITERAL($\ell$)

        \EndFor

        \State Collapse $\AND$ gates at depth three and four in
        $F'$ to a single layer.

        \State \textbf{return} $F'$

        \Procedure {LITERAL}{$\ell$}
        
        \State Let $\ell$ be $x_j$ or $\neg x_j$.

        \State $i \leftarrow \lceil j/r \rceil$; then, $x_j$ belongs to
        the group $Y_i$.

        \State \( T := \{ \alpha \in \{0,1\}^{|Y_i|} \mid \text{ the literal }\ell\text{ is falsified  by } \alpha\} \)


        \State Recall $\beta_i$ is the bijection from $\{0,1\}^{Y_i}$ to the set of minterms $\mathcal{M}_i$ of $\CM_i$.

        \State $S \leftarrow \beta_i(T)$

        \State $G \leftarrow \AND\OR\AND \text{ formula for }\ACM \text{ given by Lemma~\ref{lem:hazard-in-ACM} on input } S\subseteq \mathcal{M}_i$

        \State \textbf{return} $G$

      \EndProcedure
    \end{algorithmic}
  \end{algorithm}

  Given an assignment $y \in \{0,1\}^{rn}$ to all variables, let
  $y_j$ denote the restriction of $y$ to variables in the group
  $Y_j$.  Further let $I_{y_j}$ be the unique minterm of $\CM_j$
  associated with $y_j$.  From
  Proposition~\ref{prop:minterm-maxterm-ACM}~$(i)$, we know that
  $I_y = \bigtimes_{j \in [n]}I_{y_j}$ is a minterm of $\ACM$.
  Thus, we associate this minterm $I_y$ with the assignment $y$.
  We will now prove that $F$ is falsified by $y$ if and only if
  $F'$ has a hazard at $I_y$.

  To prove this, we consider the formula $F'$ in the algorithm
  just before collapsing $\AND$ gates of depths three and four
  (i.e., it has depth five).  The gates in $F'$ correspond to
  gates in $F$ in the following fashion: The output gate and
  gates of depth four in $F'$ correspond to the output gate and
  depth one gates in $F$ respectively and the gates of depth
  three in $F'$ correspond to literals in $F$.  Since all
  occurrences of literals in $F$ are replaced with formulas
  computing $\ACM$ in $F'$, the function computed at the output
  gate and all gates of depth four and three in $F'$ is also
  $\ACM$.

  Consider a gate $g$ at \emph{depth three} in $F'$. By
  construction, the sub-formula rooted at $g$ satisfies the
  property that it has a hazard at $I_y$ if and only if the
  corresponding literal in $F$ evaluates to $0$ on $y$.

  We now consider a gate $g$ at \emph{depth four} in $F'$.
  It is an $\AND$ gate. Assume that it evaluates to $0$ on input $y$ in $F$.
  Then, at least one of its inputs in $F$ must also evaluate to $0$ on $y$. 
  From the above argument about depth three gates, we know that the
  corresponding gate in $F'$ must have a hazard at the minterm $I_y$.
  Therefore, this gate must evaluate to $u$ on $I_y$ while
  the other inputs to the gate $g$ evaluate to $1$ or $u$.
  This is because we are evaluating an implementation of
  $\ACM$ on one of its minterms. Thus, the sub-formula rooted at
  $g$ in $F'$ must have a hazard at the minterm $I_y$ corresponding to $y$.
  In the other direction, suppose $g$ has a hazard at the
  minterm $I_y$. Then, at least one of its inputs must have a
  hazard at this minterm, which in turn implies that the corresponding
  literal in $F$ evaluates to $0$ on the assignment $y$.
  Since $g$ is an $\AND$ gate we thus obtain that $g$ evaluates to $0$ on $y$
  in $F$.   

  Finally we consider the $\OR$ gate $g$ at the root of $F'$.
  If $g$ outputs $u$ on $I_y$, all gates feeding into $g$ must
  output $u$ on $I_y$. (They cannot evaluate to $0$ because each of them
  is evaluating $\ACM$ function on a minterm.) 
  Therefore, all the corresponding gates in $F$ must output $0$ on
  $y$ causing $F$ to output $0$. On the other hand, if $F$
  outputs $0$ on $y$, every gate feeding into the root in $F$ must
  output $0$ on $y$ and therefore, all the corresponding gates in
  $F'$ must output $u$ on $I_y$ causing $F'$ to output $u$ as
  well.  \qed 
\end{proof}

\section{Detecting hazards in depth-two formulas}

We now look at the time complexity of detecting hazards in depth
two formulas.  We will focus on input formulas in \DNF. The dual
statements are true for formulas in \CNF. It is known that a
\DNF\ formula that does not contain terms with contradictory
literals (i.e., $x$ and $\NOT x$ for some variable $x$) cannot
have $0$-hazards.  Since it is trivial to remove such terms, the
interesting case is to detect 1-hazards in \DNF\ formulas.
M\'at\'e, Das, and Chuang \cite{MATE} gave an exponential time
algorithm that takes as input a \DNF\ formula and outputs an
equivalent \DNF\ formula that is hazard-free. Such an algorithm
is necessarily exponential time because there are functions that
have size $s$ \DNF\ formulas such that any hazard-free \DNF\
formulas for it has size at least $3^{s/3}$
\cite[Theorem~1.3]{CM78}. We show that there is a polynomial time
algorithm if the goal is to only \emph{detect} whether an input
\DNF\ formula has a 1-hazard.

We start with a crucial observation that help witness $1$-hazards
in \DNF\ easily. The simplest \DNF\ formula with a $1$-hazard is
$x \OR \NOT x$. We show that every \DNF\ formula with a
$1$-hazard has a $1$-hazard $\alpha \in \{0,1,u\}^n$ such that
when the \DNF\ is restricted by the stable values in $\alpha$,
the \emph{simplified} \DNF\ has the form \[x \OR \NOT x \OR H,\]
for some \DNF\ formula $H$ such that no term in $H$ evaluates to
$1$.  Obviously, it is easy to detect that there is a hazard in
such a simplified \DNF\ formulas.  We now introduce a definition
that will help us formally state the lemma.

\begin{defn}
  A \DNF\ $H$ over variables and constants is said to be \emph{equal} to $1$ if
  at least one of the terms in it evaluates to $1$. 
\end{defn}
For example, $x \OR \NOT x$ is \emph{not equal} to $1$, though it evaluates to $1$ on all possible inputs. On the other hand $x \OR \NOT x \OR 1$ is \emph{equal} to $1$, since it contains the term $1$ that trivially evaluates to $1$.  

\begin{lem}
  \label{lem:easy-hazards}
  Let $F$ be a \DNF\ on $n$ variables. Suppose that $F$ has a
  $1$-hazard. Then, there exists $\alpha \in \{0,1,u\}^n$ such
  that $F$ has a $1$-hazard at $\alpha$, and furthermore,
  \[F|_{\alpha} = x \OR \NOT x \OR H,\] for some variable $x$ and
  \DNF\ $H$ that is not equal to $1$.  Here, $F|_{\alpha}$
  represents the \DNF\ obtained by simplifying the terms of $F$
  upon substitution of variables by the stable values of $\alpha$. 
  A \DNF\ is simplified by exhaustively applying the following two rules:
  \begin{enumerate}
  \item[(i)] Remove terms with a literal that evaluates to $0$, 
  \item[(ii)] Shorten terms by the removal of literals that evaluate to $1$. 
  \end{enumerate} 
\end{lem}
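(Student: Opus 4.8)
The plan is to pin down the hazard by an extremal choice. Since $F$ has a $1$-hazard, among all $\alpha \in \{0,1,u\}^n$ at which $F$ has a $1$-hazard I would select one, call it $\alpha$, that sets the fewest coordinates to $u$; let $U \subseteq [n]$ be the set of coordinates it leaves unstable. I first record two structural facts about $F|_{\alpha}$. Because $F(\alpha)=u$, no term of $F$ evaluates to $1$ on $\alpha$, so after simplification no term of $F|_{\alpha}$ is the empty (constant $1$) term; in particular $U$ is nonempty, since $F(\alpha)=u$ forces at least one term to be $u$, which requires an unstable literal. Because every resolution $b$ of $\alpha$ satisfies $F(b)=1$, the \DNF\ $F|_{\alpha}$, read as a function of the variables in $U$, is a tautology.

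The crux is to exploit minimality by resolving one unstable coordinate at a time. Fix $j\in U$ and $c \in \{0,1\}$, and let $\alpha'$ be $\alpha$ with coordinate $j$ changed from $u$ to $c$. Every resolution of $\alpha'$ is a resolution of $\alpha$ and hence evaluates to $1$, so $F(\alpha') \in \{1,u\}$. If $F(\alpha')=u$, then $\alpha'$ would be a $1$-hazard with one fewer unstable coordinate, contradicting the choice of $\alpha$; therefore $F(\alpha')=1$. I would then argue that $F(\alpha')=1$ forces the singleton term associated with $x_j = c$ to occur in $F|_{\alpha}$: the term of $F$ witnessing $F(\alpha')=1$ has all of its literals satisfied by $\alpha'$, so it carries no literal on $U\setminus\{j\}$ (those coordinates are still $u$), and it must carry the $x_j$-literal that $c$ satisfies (otherwise it would already evaluate to $1$ at $\alpha$, contradicting $F(\alpha)=u$); all its remaining literals lie on stable coordinates of $\alpha$ and are therefore deleted in the simplification, leaving exactly the singleton $x_j$ (if $c=1$) or $\NOT x_j$ (if $c=0$).

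Applying this to both values $c=0$ and $c=1$ of a single chosen variable $x = x_j$ with $j \in U$ shows that $F|_{\alpha}$ contains both terms $x$ and $\NOT x$. Collecting all remaining terms into $H$ yields $F|_{\alpha} = x \OR \NOT x \OR H$, and $H$ inherits from $F|_{\alpha}$ the absence of an empty term, so $H$ is not equal to $1$, as required. I expect the main obstacle to be the final step of the middle paragraph, namely carefully justifying that the single term responsible for $F(\alpha')=1$ collapses to a singleton in $F|_{\alpha}$ rather than to a longer term; this rests on the two observations that the only unstable literal such a term can retain is the one on $x_j$, and that it cannot already have evaluated to $1$ at $\alpha$.
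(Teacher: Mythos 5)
Your proof is correct, but it takes a genuinely different route from the paper's. The paper argues iteratively: starting from an arbitrary $1$-hazard $\beta$, it repeatedly picks a literal of the simplified \DNF\ and sets it to $0$, observing that this only kills or shortens terms (never makes one equal to $1$), so the hazard propagates; the process runs until the simplified formula has the form $x \OR \NOT x \OR H$, with the one-variable case as the guaranteed endpoint. You instead make an extremal choice---a $1$-hazard $\alpha$ with the fewest unstable coordinates---and read off the structure directly: resolving any unstable coordinate $j$ to either value $c$ cannot keep the output at $u$ (by minimality) nor drop it to $0$ (by monotonicity of the ternary semantics with respect to resolutions), so it yields $1$, and the witnessing term is forced to collapse to the singleton literal on $x_j$ under simplification by $\alpha$. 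The two arguments are close in spirit (your minimal hazard is an endpoint the paper's greedy process could reach), but yours avoids tracking an invariant through an iteration and in fact proves something slightly stronger: at a minimal $1$-hazard, \emph{every} unstable coordinate contributes a complementary pair of singleton terms to $F|_{\alpha}$. The step you flag as delicate---that the witnessing term carries no literal on $U\setminus\{j\}$, must carry the $x_j$-literal satisfied by $c$ (else it would already evaluate to $1$ at $\alpha$), and sheds its remaining literals under rule~(ii)---is exactly the right point to check, and the observations you give do justify it.
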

\begin{proof}
  Let $\beta \in \{0,1,u\}^n$ be an arbitrary 1-hazard for $F$.
  Substitute the stable variables given by $\beta$ in $F$ to
  obtain a simplified \DNF\ $G$. If $G =  x \OR \NOT x \OR H$,
  for some variable $x$ and \DNF\ $H$, then $H$ is not
  equal to $1$ because $G$ must evaluate to $u$ on $\beta$ and, hence, 
  $\beta$ is the required $1$-hazard.  Suppose not, then either
  there exists a term of size $1$ in $G$ or every term is of size
  at least $2$.  In both cases we construct the required hazard
  from $\beta$ iteratively.  We will increase the number of
  stable values in $\beta$ at each step while ensuring that it
  remains a $1$-hazard.  In particular, we argue in both cases
  that there exists a variable $x$ in $G$ such that we can set it
  to $0$ and the resulting partial assignment is still a
  $1$-hazard.  Clearly this process terminates in at most $n$
  steps.  We now show how to find the variable in each case.

  Suppose there exists a term of size $1$ in $G$. That is,
  $G =\ell \OR H$, for some literal $\ell$ and, moreover, $H$ does
  not have $\NOT{\ell}$ as a term. Then we extend the partial
  assignment $\beta$ by setting $\ell = 0$.  We now claim that
  the new partial assignment is still a $1$-hazard.  This is
  easily seen because $\ell =0$ either kills a term in $G$ or
  reduces its size. It never makes a term evaluate to $1$, and
  therefore the hazard propagates.

  In the remaining case, every term in $G$ is of size at least
  $2$.  We pick an arbitrary literal from an arbitrary term and
  set it to $0$.  Again as before we can argue that the hazard
  propagates since any term is either killed or reduced in size,
  but never evaluated to $1$.

  Note that if $G$ has only one variable, then it must be $x \OR
  \NOT x$ for some $x$. This completes the proof of the lemma.
  \qed
\end{proof}
We now give a polynomial time algorithm to detect $1$-hazards in
\DNF\ formulas.
\begin{thm}
  \label{thm:poly-time-detection}
  There is a polynomial time algorithm that detects $1$-hazards
  in \DNF\ formulas.
\end{thm}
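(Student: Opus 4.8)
The plan is to convert the structural characterization of Lemma~\ref{lem:easy-hazards} into a search over polynomially many structured candidate hazards. The lemma guarantees that if $F$ has a $1$-hazard at all, then it has one at some $\alpha$ for which $F|_\alpha = x \OR \NOT x \OR H$ with $H$ not equal to $1$. In any such $\alpha$ there must be two \emph{distinct} terms $t_a, t_b$ of $F$ that simplify to the single literals $x$ and $\NOT x$ respectively. I would therefore enumerate all candidate triples $(t_a, t_b, x)$, where $t_a, t_b$ are distinct terms and $x$ is a variable occurring positively in $t_a$ (and not negatively) and negatively in $t_b$ (and not positively). For an $m$-term, $n$-variable \DNF\ there are only $O(m^2 n)$ such triples.

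First, for each triple I would build the \emph{minimal forced} partial assignment $\gamma = \gamma_{a,b,x} \in \{0,1,u\}^n$: set $x = u$, and for every other literal occurring in $t_a$ or $t_b$ set its underlying variable so that the literal evaluates to $1$. Under $\gamma$ the term $t_a$ simplifies to $x$ and $t_b$ to $\NOT x$ (their remaining literals are all removed by simplification rule (ii), and none is killed by rule (i) since we only set literals to $1$). I would then run two polynomial-time checks: (a) \emph{consistency}, i.e. no variable is forced to both $0$ and $1$ (this fails exactly when some variable appears with opposite polarity across $t_a$ and $t_b$); and (b) \emph{no term becomes the constant $1$}, i.e. no term of $F$ has all its literals satisfied by $\gamma$. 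If some triple passes both checks I would report that $F$ has a $1$-hazard, and otherwise report that it does not. Each triple is processed in $\poly(m,n)$ time, so the whole algorithm runs in polynomial time.

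For soundness, whenever a triple passes, $F|_\gamma$ is literally $x \OR \NOT x \OR H$, where $H$ collects the surviving shortened terms; check (b) guarantees $H$ is not equal to $1$, and one verifies directly that such a $\gamma$ is a genuine $1$-hazard: it evaluates to $u$ on $\gamma$ because every surviving term is an AND of $u$-variables, and it evaluates to $1$ on every resolution because resolving $x$ satisfies one of $x, \NOT x$. For completeness I would invoke Lemma~\ref{lem:easy-hazards} to obtain a witness $\alpha$ of the canonical form, read off the two terms $t_a, t_b$ and the surviving variable $x$, and observe that the resulting forced assignment $\gamma$ is a sub-assignment of $\alpha$, so that both checks will pass on this triple.

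The step I expect to be the crux is justifying that it suffices to test only the single minimal assignment $\gamma$ rather than the exponentially many extensions $\alpha \supseteq \gamma$. The key is a monotonicity observation: making more variables stable can only help a term reach the constant $1$, never hinder it. Concretely, if a term $t_c$ is fully satisfied (becomes $1$) under $\gamma$, then it is fully satisfied under every extension $\alpha \supseteq \gamma$ as well. Hence if the canonical witness $\alpha$ has no term equal to $1$ (which holds because $H$ is not equal to $1$ there), then neither does the smaller $\gamma$, so check (b) succeeds for the triple read off from $\alpha$. This monotonicity is exactly what lets the search collapse from $3^n$ assignments down to $O(m^2 n)$ structured candidates, yielding the claimed polynomial-time algorithm.
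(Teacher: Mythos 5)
Your proposal is correct and follows essentially the same route as the paper: both reduce the problem via Lemma~\ref{lem:easy-hazards} to searching over pairs of terms $S\AND x$ and $T \AND \NOT x$ and checking whether $S$ and $T$ can be simultaneously satisfied without any other term becoming $1$. You merely spell out more explicitly than the paper how that check is implemented (via the minimal forced assignment and the monotonicity observation), which is a faithful elaboration rather than a different argument.
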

\begin{proof}
  Let $F$ be the input \DNF\ formula. From
  Lemma~\ref{lem:easy-hazards}, we know that to check whether $F$
  has a $1$-hazard it suffices to check for an
  \emph{easy-to-detect} hazard.  Observe that an easy-to-detect
  hazard is nothing but a partial assignment $\alpha$ such that
  $F|_\alpha$ has both $x$ and $\NOT x$ as terms for some
  variable $x$ and, furthermore, no term evaluates to $1$.
  In fact, we give a polynomial time procedure to
  find an easy-to-detect hazard. 

  To find such a partial assignment we do the following: For
  every pair of terms $S\AND x$ and $T\AND \NOT{x}$ in $F$, for
  some variable $x$, we check if $S$ and $T$ can be
  simultaneously set to $1$ while no other term in $F$ evaluates
  to $1$. If so, then clearly this partial assignment is an
  easy-to-detect hazard.

  It is easily seen that the above procedure runs in polynomial
  time. \qed
\end{proof}

Even though all $0$-hazards can be eliminated from \DNF\ formulas
by removing all terms with contradictory literals, the presence
of such terms do not imply a $0$-hazard.  For example, the single
variable \DNF\ formula $(x\AND\NOT{x}) \OR x$ does not have any
hazards.

In contrast to the poly-time algorithm to detect $1$-hazards, the
following simple reduction shows that detecting $0$-hazards in
\DNF\ formulas is hard.

\begin{thm}
  If \SETH\ is true, then there is no
  $O(2^{(1-\epsilon)n}\poly(s))$ time algorithm, for any
  $\epsilon > 0$, that detects 0-hazards in \DNF\ formulas on $n$
  variables and $s$ terms.
\end{thm}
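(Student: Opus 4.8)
The plan is to give a one-gadget reduction from \DNFFALSE, which by the discussion at the end of Section~\ref{sec:prelim} admits no $O(2^{(1-\epsilon)n}\poly(s))$ time algorithm under \SETH. Given a \DNF\ formula $F$ on $n$ variables with $s$ terms, I would introduce a single fresh variable $z$ not occurring in $F$ and output
\[F' = F \OR (z \AND \NOT z).\]
This $F'$ is again a \DNF, now on $n+1$ variables with $s+1$ terms, and it is produced in linear time. The whole argument rests on the fact that the contradictory term $z \AND \NOT z$ evaluates to $0$ on every stable assignment but to $u$ when $z = u$; it acts as a ``hazard switch'' that is turned on precisely by making $z$ unstable, while being invisible to $F$ on stable inputs.

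First I would check the forward direction. Suppose $a \in \{0,1\}^n$ falsifies $F$, i.e.\ $F(a)=0$, and consider the input $(a, z=u)$. There $z \AND \NOT z = u \AND u = u$ while $F(a)=0$, so $F'(a,z=u) = 0 \OR u = u$. The only unstable coordinate is $z$, whose two resolutions are $z=0$ and $z=1$; in both cases $z \AND \NOT z = 0$ and $F(a)=0$, so $F'$ evaluates to $0$ on every resolution. Hence $(a,z=u)$ is a $0$-hazard of $F'$. For the converse, suppose $F'$ has a $0$-hazard at some $b \in \{0,1,u\}^{n+1}$. By definition every resolution of $b$ is a stable assignment on which $F'$ evaluates to $0$; fix any such resolution $c$. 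Since $c$ is stable, the gadget term contributes $0$, so the value of $F'$ at $c$ equals the value of $F$ at the restriction $c'$ of $c$ to the original variables. As $F'(c)=0$ we obtain $F(c')=0$, so $F$ is falsifiable.

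Finally I would track the parameters: an assumed $O(2^{(1-\epsilon)N}\poly(S))$ time algorithm detecting $0$-hazards in \DNF, run on our instance with $N=n+1$ and $S=s+1$, decides \DNFFALSE\ on $n$ variables in $O(2^{(1-\epsilon)(n+1)}\poly(s+1)) = O(2^{(1-\epsilon)n}\poly(s))$ time, contradicting the consequence of \SETH\ recorded in Section~\ref{sec:prelim}. I do not expect any genuine obstacle, as the reduction is immediate and correctness is a short three-valued case check. The only point deserving explicit care is confirming that the single extra variable preserves the target exponent: the apparent blow-up to $2^{(1-\epsilon)(n+1)}$ differs from $2^{(1-\epsilon)n}$ only by the multiplicative constant $2^{1-\epsilon}$, so it is absorbed into the $O(\cdot)$ and the lower bound transfers verbatim.
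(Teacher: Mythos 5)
Your reduction is exactly the paper's: append a contradictory term $z \AND \NOT z$ on a fresh variable, note that a falsifying assignment extended by $z=u$ is a $0$-hazard, and that any $0$-hazard yields a stable falsifying resolution (the paper phrases this converse as ``if $F$ is a tautology then so is $G$, so $G$ has no $0$-hazard,'' which is the contrapositive of your argument). The proposal is correct and takes essentially the same approach, with the added (harmless) bookkeeping of the $+1$ in the variable count.
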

\begin{proof}
  We will reduce the \DNF\ falsifiability problem to this
  problem.  Let $F$ be the input \DNF\ formula for the
  falsifiability problem.  We assume without loss of generality
  that $F$ does not contain terms with contradictory literals.
  We now claim that the \DNF\ formula $G = F \OR (x\AND\NOT{x})$,
  where $x$ is a new variable, has a $0$-hazard if and only if
  $F$ is falsifiable.  If $F$ is falsifiable at an input $a$,
  then the input $(a, u)$ is a $0$-hazard for $G$.  On the other
  hand if $F$ is a tautology, then so is $G$ and, therefore, $G$
  cannot have a $0$-hazard.\qed
\end{proof}

The above results can also be stated for \CNF\ formulas using the
following observation.

\begin{obs}
  A \CNF\ formula $F$ has a $0$-hazard if and only if the \DNF\
  formula $G = \neg F$ has a $1$-hazard.
\end{obs}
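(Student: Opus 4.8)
The plan is to show that $G = \neg F$, formed by pushing the negation down to the leaves via De Morgan's laws, computes exactly $\neg F(a)$ on every ternary input $a \in \{u,0,1\}^n$, and then to observe that the $0$-hazard condition for $F$ translates verbatim into the $1$-hazard condition for $G$ under the involution $\neg$ that fixes $u$ and swaps $0$ and $1$.

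First I would verify that De Morgan's laws hold in the three-valued logic of Table~\ref{tab:truth}: a case check shows $\neg(a \wedge b) = \neg a \vee \neg b$ and $\neg(a \vee b) = \neg a \wedge \neg b$ as functions on $\{u,0,1\}$, and that $\neg$ is an involution ($\neg \neg a = a$) fixing $u$. Since $F$ is a \CNF, i.e. an $\AND$ of $\OR$s of literals, repeatedly applying these identities to $\neg F$ pushes the top negation down, turning the outer $\AND$ into an $\OR$, each clause into a term, and each literal into its negation. Because every rewriting step preserves the value on each ternary assignment, the resulting \DNF\ $G$ satisfies $G(a) = \neg F(a)$ for all $a \in \{u,0,1\}^n$. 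This pointwise identity is the crux: it is what lets us compare the hazard behavior of the two syntactically distinct formulas, recalling that being hazard-free is a property of the formula and not merely of the function it computes.

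Given the identity, the remainder is a direct unwinding of the definitions. Fix $a \in \{u,0,1\}^n$ and note that the set of resolutions of $a$ depends only on $a$, not on the formula. Since $\neg$ fixes $u$, we have $F(a) = u \iff G(a) = u$; and since $\neg$ swaps $0$ and $1$, for every resolution $b$ of $a$ we have $F(b) = 0 \iff G(b) = 1$. Therefore $F$ has a $0$-hazard at $a$, meaning $F(a) = u$ while $F(b) = 0$ for all resolutions $b$, exactly when $G(a) = u$ while $G(b) = 1$ for all resolutions $b$, which is precisely a $1$-hazard of $G$ at $a$. Quantifying over $a$ yields the stated equivalence.

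I expect the only subtle point to be the first step: one must be careful that $G$ denotes the specific \DNF\ produced by De Morgan, so that it is genuinely a depth-two \DNF, and confirm that its \emph{ternary} evaluation agrees with $\neg F$, rather than merely agreeing on Boolean inputs. Once the three-valued De Morgan identities are checked, everything else is immediate.
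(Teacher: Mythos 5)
Your proposal is correct and follows essentially the same route as the paper: the paper's clause-by-clause correspondence between $F$ and $G$ on a ternary input is exactly the depth-two instance of the three-valued De Morgan identity you verify, and the treatment of resolutions is the same definition-unwinding (which the paper leaves implicit). Your version is somewhat more explicit about checking that the ternary evaluations agree pointwise, which is a worthwhile precaution but not a different argument.
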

\begin{proof}
  Assume $F$ has a $0$-hazard at $a\in \{u, 0, 1\}^n$. Each
  clause in $F$ evaluates to $1$ or $u$ on input $a$. Then, the
  corresponding term in $G$ evaluates to $0$ or $u$ respectively
  which implies that the output of $G$ is also $u$. The other
  direction is similar.
\end{proof}

Observe that any \CNF\ formula that has a $1$-hazard contains a
clause that contains a variable and its negation. Therefore, we
can easily eliminate all $1$-hazards from a given $\CNF$ formula.
Combining these observations with theorems for $\DNF$ formulas,
we have the following theorem.

\begin{thm}
  There is a polynomial-time algorithm that detects $0$-hazards
  in \CNF\ formulas. Also, if \SETH\ is true, there is no
  $O(2^{(1-\epsilon)n}\poly(s))$ time algorithm for any $\epsilon
  > 0$ that detects $1$-hazards in \CNF\ formulas on $n$
  variables and $s$ clauses.
\end{thm}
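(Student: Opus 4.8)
The plan is to derive both halves of the statement from the preceding Observation together with the two results already proved for \DNF\ formulas, exploiting the fact that negation is an involution that exchanges $0$-hazards and $1$-hazards. Concretely, for any formula $H$ and any input $a \in \{0,1,u\}^n$, three-valued De Morgan duality gives $(\NOT H)(a) = \NOT(H(a))$; since $\NOT u = u$ while $\NOT$ swaps $0$ and $1$, the formula $H$ has a $0$-hazard at $a$ if and only if $\NOT H$ has a $1$-hazard at $a$. This is exactly the content of the Observation above (and of its ``the other direction is similar'' remark), now read as an involutive equivalence. Crucially, if $H$ is a \CNF\ (resp.~\DNF) then $\NOT H$, after pushing the negation to the leaves by De Morgan, is a \DNF\ (resp.~\CNF) on the \emph{same} $n$ variables whose number of terms equals the number of clauses of $H$ (resp.~vice versa); no new variables are introduced and the size changes only by a constant factor.

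For the algorithmic half, given a \CNF\ formula $F$ on $n$ variables I would compute $G = \NOT F$ in polynomial time by De Morgan, obtaining a \DNF\ on the same $n$ variables. By the Observation, $F$ has a $0$-hazard if and only if $G$ has a $1$-hazard, so it suffices to run the polynomial-time $1$-hazard detector of Theorem~\ref{thm:poly-time-detection} on $G$ and report its answer. The whole procedure is clearly polynomial time.

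For the hardness half, I would reduce from the problem of detecting $0$-hazards in \DNF\ formulas, which the preceding theorem shows admits no $O(2^{(1-\epsilon)n}\poly(s))$ time algorithm under \SETH. Given a \DNF\ instance $G$ on $n$ variables with $s$ terms, form the \CNF\ $F = \NOT G$ by De Morgan; then $F$ has $n$ variables and $s$ clauses, and by the involutive duality above $G$ has a $0$-hazard if and only if $F = \NOT G$ has a $1$-hazard. Hence any $O(2^{(1-\epsilon)n}\poly(s))$ time algorithm detecting $1$-hazards in \CNF\ formulas would, composed with this linear-time transformation, decide $0$-hazards in \DNF\ formulas within the same time bound, contradicting \SETH.

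I expect the argument to be essentially routine, since everything rides on the duality already recorded in the Observation. The one point that must be checked carefully --- and the only place a careless reduction could fail --- is that the De Morgan transformation preserves the parameters \emph{exactly}: the variable count $n$ is unchanged (in particular, unlike the reduction in the previous theorem, no auxiliary variable is added) and the clause/term count is preserved, so that the $2^{(1-\epsilon)n}$ form of the lower bound, which is sensitive to the exact number of variables, transfers verbatim. Once this bookkeeping is confirmed, both halves follow immediately.
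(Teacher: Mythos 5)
Your proposal is correct and follows essentially the same route the paper intends: the paper derives this theorem by ``combining'' the stated Observation (negation exchanges $0$- and $1$-hazards between \CNF\ and \DNF) with the two \DNF\ theorems, exactly as you do, and your extra care about the De Morgan transformation preserving the variable count $n$ and the clause/term count is the right bookkeeping to make the $2^{(1-\epsilon)n}$ bound transfer. The only cosmetic difference is that you invoke the involutive form of the duality (\CNF\ $1$-hazard $\leftrightarrow$ \DNF\ $0$-hazard), which the paper leaves implicit in ``the other direction is similar.''
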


\section{Conclusion}

We show that under \SETH\ the straightforward hazard detection
algorithm cannot be significantly improved upon, even when the
inputs are restricted to be depth-$4$ formulas. We also show that
there are polynomial time algorithms for detecting $1$-hazards
in \DNF\ formulas (resp., $0$-hazards in \CNF), while $0$-hazards
(resp., $1$-hazards) can be easily eliminated.  The complexity of
hazard detection for depth-$3$ formulas remain open. 

\section*{Acknowledgements}

The authors would like to thank the anonymous reviewers. Their
suggestions helped to greatly improve the presentation of results
in the paper.

\bibliographystyle{elsarticle-num}
\bibliography{seth}

\begin{thebibliography}{10}
\expandafter\ifx\csname url\endcsname\relax
  \def\url#1{\texttt{#1}}\fi
\expandafter\ifx\csname urlprefix\endcsname\relax\def\urlprefix{URL }\fi
\expandafter\ifx\csname href\endcsname\relax
  \def\href#1#2{#2} \def\path#1{#1}\fi

\bibitem{IKLLMS18}
C.~Ikenmeyer, B.~Komarath, C.~Lenzen, V.~Lysikov, A.~Mokhov, K.~Sreenivasaiah,
  On the complexity of hazard-free circuits, in: Proceedings of the 50th Annual
  {ACM} {SIGACT} Symposium on Theory of Computing, {STOC} 2018, Los Angeles,
  CA, USA, June 25-29, 2018, 2018, pp. 878--889.

\bibitem{EIC65}
E.~B. Eichelberger, Hazard detection in combinational and sequential switching
  circuits, IBM J. Res. Dev. 9~(2) (1965) 90--99.

\bibitem{CM78}
A.~K. Chandra, G.~Markowsky, On the number of prime implicants, Discrete
  Mathematics 24~(1) (1978) 7 -- 11.

\bibitem{XN17}
M.~Xiao, H.~Nagamochi, Exact algorithms for maximum independent set,
  Information and Computation 255 (2017) 126 -- 146.

\bibitem{IP01}
R.~Impagliazzo, R.~Paturi, On the complexity of k-sat, J. Comput. Syst. Sci.
  62~(2) (2001) 367--375.

\bibitem{IPZ01}
R.~Impagliazzo, R.~Paturi, F.~Zane, Which problems have strongly exponential
  complexity?, J. Comput. Syst. Sci. 63~(4) (2001) 512--530.

\bibitem{LMS11}
D.~Lokshtanov, D.~Marx, S.~Saurabh, Lower bounds based on the exponential time
  hypothesis, Bulletin of the {EATCS} 105 (2011) 41--72.

\bibitem{Cygan12}
M.~Cygan, H.~Dell, D.~Lokshtanov, D.~Marx, J.~Nederlof, Y.~Okamoto, R.~Paturi,
  S.~Saurabh, M.~Wahlstr{\"{o}}m, On problems as hard as {CNF-SAT}, in: {IEEE}
  Conference on Computational Complexity, {IEEE} Computer Society, 2012, pp.
  74--84.

\bibitem{Bringmann15}
K.~{Bringmann}, M.~{K\"{u}nnemann}, Quadratic conditional lower bounds for
  string problems and dynamic time warping, in: 2015 {IEEE} 56th Annual
  Symposium on Foundations of Computer Science, 2015, pp. 79--97.

\bibitem{Abboud15}
A.~Abboud, A.~Backurs, V.~V. Williams, Tight hardness results for {LCS} and
  other sequence similarity measures, in: {FOCS}, {IEEE} Computer Society,
  2015, pp. 59--78.

\bibitem{VWilliams18}
V.~V. Williams, Some open problems in fine-grained complexity, {SIGACT} News
  49~(4) (2018) 29--35.

\bibitem{Huf:57}
D.~A. Huffman, The design and use of hazard-free switching networks, J. ACM
  4~(1) (1957) 47--62.

\bibitem{MATE}
L.~L. M\'at\'e, S.~Das, H.~Y. Chuang, A logic hazard detection and elimination
  method, Information and Control 26~(4) (1974) 351 -- 368.

\end{thebibliography}

\end{document}